\journal{Systems and Control Letters}
\newtheorem{remark}{Remark}
\newtheorem{definition}{Definition}
\newtheorem{theorem}{Theorem}
\newtheorem{proposition}{Proposition}
\newtheorem{lemma}{Lemma}
\newtheorem{example}{Example}
\newtheorem{assumption}{Assumption}
\newcommand{\oomit}[1]{}
\begin{document}

\begin{frontmatter}

\title{Converse Barrier Certificates for Finite-time Safety Verification of Continuous-time Perturbed Deterministic Systems\footnote{To appear in Systems \& Control Letters}} 


\author{Yonghan Li$^{1,2}$, Chenyu Wu$^{1,2}$, Taoran Wu$^{1,2}$, Shijie Wang$^3$ and Bai Xue$^{1,2}$}   

\address{1. Key Lab. of System Software (Chinese Academy of Sciences), Institute of Software, CAS, Beijing, China \\ \{wucy,wutr,xuebai\}@ios.ac.cn} 

\address{2. University of Chinese Academy of Sciences, Beijing, China}

\address{3. School of Mathematics and Physics, University of Science and Technology Beijing, Beijing, China \\
wangshijie@ustb.edu.cn}  
          
\begin{keyword}                           
Finite-time Safety Verification; Converse Barrier Certificates;  Continuous-time Perturbed Deterministic Systems.              
\end{keyword}

\begin{abstract}                          
In this paper, we investigate the problem of verifying the finite-time safety of continuous-time perturbed deterministic systems represented by ordinary differential equations in the presence of measurable disturbances. Given a finite-time horizon, if the system is safe, it, starting from a compact initial set, will remain within an open and bounded safe region throughout the specified time horizon, regardless of the disturbances. The main contribution of this work is a converse theorem: we prove that a continuously differentiable, time-dependent barrier certificate exists if and only if the system is safe over the finite-time horizon. The existence problem is explored by finding a continuously differentiable approximation of a unique Lipschitz viscosity solution to a Hamilton-Jacobi equation.
\end{abstract}

\end{frontmatter}

\section{Introduction}
The rapid growth of modern technology has resulted in the proliferation of intelligent autonomous systems. Ensuring the safe operation of these systems is of utmost importance, but the presence of external disturbances introduces uncertainties that must be addressed to guarantee the system's safety. Establishing set invariance as a means of ensuring safety has proven to be a valuable approach in a range of robotics and control applications. In recent years, the utilization of barrier certificates to guarantee set invariance and safety in control dynamical systems has gained widespread recognition, particularly in safety-critical contexts \cite{ames2019control}.

The barrier certificates method, first introduced in \cite{prajna2004safety,prajna2007framework}, was initially proposed for safety verification of continuous-time deterministic and stochastic systems over the infinite-time horizon. Over time, several barrier certificates for infinite-time safety verification have been developed, each offering different levels of expression and capabilities. Meanwhile, converse theorems for barrier certificates, which concern the existence of barrier certificates, have played a crucial role in understanding how safety properties can be characterized by barrier certificates. As a result, they have gained increasing attention since the introduction of barrier certificates, and have been explored in \cite{prajna2005necessity,prajna2007convex,wisniewski2015converse,ratschan2018converse,liu2021converse,maghenem2022converse,10316551}. However, the requirement for safety verification over an infinite-time horizon, particularly under external and/or internal perturbations, can be considered overly stringent—especially in contexts where practical applications operate under defined time constraints and can demonstrate consistent safety within those boundaries. For example, many real-world systems, such as those in automotive or aerospace engineering, function within specific operational periods that necessitate finite-time safety verification, making it a more relevant and achievable metric for assessing safety. Despite this, there has been relatively less focus on applying barrier certificates to finite-time safety verification for continuous-time deterministic systems \cite{gan2016barrier}. In contrast, there is growing interest in finite-time safety verification for continuous-time stochastic systems \cite{steinhardt2012finite,santoyo2021barrier,wang2021safety,xue2023new}. Additionally, the converse theorem of barrier certificates for finite-time safety verification has yet to be thoroughly investigated.

In this paper, we investigate the use of time-dependent barrier certificates for finite-time safety verification of continuous-time systems governed by ordinary differential equations with perturbation inputs. Finite-time safety verification aims to establish whether, starting from a predetermined initial set, the system evolves within a designated safe set throughout the given time horizon. The main contribution of this study is the revelation that the existence of a continuously differentiable time-dependent barrier certificate is both a necessary and sufficient condition for the system's safety over finite-time horizons. This revelation is explored within the framework of Hamilton-Jacobi reachability. Initially, we design a value function that remains strictly negative over the initial set if and only if the system is safe within the finite-time horizon. This value function is defined based on the function that describes the safe set and the system's trajectories. Subsequently, we reduce the value function to the unique Lipschitz viscosity solution of a Hamilton-Jacobi partial differential equation. As a third step, we derive a condition from the Hamilton-Jacobi equation and demonstrate that this condition allows for the existence of a continuously differentiable function if and only if the system is safe over the finite-time horizon. Finally, we show that this continuously differentiable function serves as a time-dependent barrier certificate.

\subsection{Related Work}
Hamilton-Jacobi equations have been widely utilized in solving optimal control problems \cite{bardi1997optimal} and, in recent decades, have been extended to reachability analysis (e.g.,  \cite{mitchell2005time,margellos2011hamilton,xue2022differential}). In this framework, reachable sets correspond to specific (sub)super level sets of the equation's viscosity solutions. While substantial work has focused on scalable numerical approximations to mitigate the computational challenge of finding exact solutions \cite{bansal2017hamilton}, alternative approaches have sought efficiency through relaxation. Notably, \cite{xue2019robust,xue2019inner} introduced barrier-like constraints for computing inner approximations of robust invariant sets and reach-avoid sets. These constraints are obtained via relaxing Hamilton-Jacobi equations, and can be efficiently solved using semi-definite programming tools, particularly for polynomial systems. Notably, \cite{xue2019inner} demonstrated the existence of a continuously differentiable solution to the established constraint, which can approximate the viscosity solution of the original Hamilton-Jacobi equation with arbitrary accuracy. In this paper, we extend the methodology proposed by \cite{xue2019inner} to the finite-time safety verification of continuous-time deterministic systems and establish that there exists a time-dependent barrier certificate if and only if the system is safe. We emphasize that although both barrier-certificate computation and Hamilton–Jacobi reachability analysis face computational challenges for general nonlinear systems, our theoretical results establish the fundamental completeness of the barrier-certificate approach for finite-time safety verification. This existence guarantee allows computational efforts to focus on developing efficient approximation methods without fundamental limitations. While our proof builds upon the approximation framework in \cite{xue2019inner}, it extends that framework by establishing necessary and sufficient conditions specifically for finite-time safety through barrier certificates. This theoretical foundation motivates the design of specialized computational methods that may alleviate the curse of dimensionality inherent in direct Hamilton–Jacobi solutions.

Converse theorems for barrier certificates have received attention since their introduction. \cite{prajna2005necessity,prajna2007convex,wisniewski2015converse,ratschan2018converse,liu2021converse,10316551} investigated the existence of time-independent barrier functions for the infinite-time safety verification of (perturbed) deterministic systems under certain assumptions. Under mild regularity conditions, \cite{maghenem2022converse} investigated converse theorems for safety over the infinite-time horizon using time-dependent barrier functions. For systems dynamics described by a set-valued map, it explored the existence of lower semicontinuous and locally Lipschitz time-dependent barrier functions. In contrast, the present work studied converse theorems for safety over finite-time horizons in terms of continuously differentiable, time-dependent barrier functions. Moreover, although finite-time safety verification is considered, the time horizon can be arbitrarily large but bounded.

In addition, numerous well-established reachability methods \cite{althoff2021set} have been developed to compute over-approximations of reachable states for finite-time safety verification. Many methods achieve convergence in the Hausdorff distance (e.g., \cite{rungger2018accurate}), providing rigorous set approximations that can avoid unsafe regions. However, some approaches rely on measure-theoretic convergence (e.g., \cite{henrion2013convex}), which may not theoretically ensure that over-approximations completely avoid intersecting unsafe sets, even if the system is safe. Our work complements Hausdorff-convergent reachability methods by offering an alternative verification paradigm through barrier certificates. Specifically, we establish that a continuously differentiable, time-dependent barrier certificate exists if and only if the system is safe over finite-time horizons, providing a function-based safety proof without explicit reach set computation. This contrasts with reachability analysis that focuses on set computations, and each methodology has distinct advantages depending on the verification context.
\color{black}

\section{Preliminaries}
\label{pre}
In this section, we present the concepts of continuous-time perturbed deterministic systems, finite-time safety verification problem, and time-dependent barrier certificates for addressing it. Before posing them, let us introduce some basic notions used throughout this paper: $\mathbb{R}$ stands for the set of reals. $\mathbb{R}[\boldsymbol{x}]$ denotes the ring of polynomials in the variables $\boldsymbol{x}$ with real coefficients. The closure of a set $X$ is denoted by $\overline{X}$, its complement by $X^c$ and its boundary by $\partial X$.  Vectors are denoted by boldface letters.  

The continuous-time perturbed deterministic system considered in this work has dynamics described by the following ordinary differential equation subject to disturbances:
\begin{equation}
\label{sys}
\dot{\bm{x}} = \bm{f}(\bm{x},\bm{d}) = \bm{f}_1(\bm{x}) + \bm{f}_2(\bm{x}) \bm{d}(t), \quad
\bm{x}(0) = \bm{x}_0 \in \mathbb{R}^n,
\end{equation}
where $\dot{\bm{x}} = \frac{d\bm{x}(t)}{dt}$ denotes the time derivative of the state $\bm{x}(t)$. Here, 
\[
\bm{f}_1(\bm{x}) = (f_{1,1}(\bm{x}), \ldots, f_{1,n}(\bm{x}))^\top, \quad
\bm{f}_2(\bm{x}) = (f_{2,1}(\bm{x}), \ldots, f_{2,n}(\bm{x}))^\top,
\]
with $f_{1,i}(\cdot) \colon \mathbb{R}^n \to \mathbb{R}$ and $f_{2,i}(\cdot)\colon \mathbb{R}^n \to \mathbb{R}^m$ for $i=1,\ldots,n$. The vector $\bm{d}(t) = (d_1(t), \ldots, d_m(t))^\top$ represents a collection of disturbance inputs, with each $d_i(\cdot) \colon [0,\infty) \to \mathbb{R}$ for $i=1,\ldots,m$.

 The following assumptions are imposed on the system \eqref{sys}. 
\begin{assumption}
\label{assmp}
  1) The disturbance input $\bm{d}(\cdot)\colon \mathbb{R}\rightarrow \mathcal{D}$ is a measurable function, where $\mathcal{D}\subseteq \mathbb{R}^m$ is compact and convex. Let  $\mathcal{M}$ denote the set of measurable functions from $[0,\infty)$ to $\mathcal{D}$, i.e., 
        $\mathcal{M}: =\{\bm{d}(\cdot)\colon [0,\infty)\rightarrow \mathcal{D} \text{~is~measurable}\}$.
   2) The function $\bm{f}$ is locally Lipschitz continuous with respect to $\bm{x}$ uniformly over $\bm{d}\in \mathcal{D}$, i.e., for each compact set $\mathcal{X}\subseteq \mathbb{R}^n$, there exists $L_{\bm{f}}$ such that 
        \[\|\bm{f}(\bm{x},\bm{d})-\bm{f}(\bm{y},\bm{d})\|\leq L_{\bm{f}}\|\bm{x}-\bm{y}\|, \forall \bm{x}, \bm{y}\in \mathcal{X}, \forall \bm{d}\in \mathcal{D}.\]
\end{assumption}

\begin{remark}
Although the finite-time safety verification problem considered in this paper concerns only system trajectories over a finite horizon $[0,T]$, the disturbance inputs are required to be defined on $[0,\infty)$ for a specific reason. 
The Hamilton-Jacobi equation used in our analysis in Section~\ref{problem_solving} is posed on an infinite time domain and requires the existence of global-in-time solutions to the associated partial differential equation. 
To ensure well-posedness of this Hamilton--Jacobi equation, it is therefore standard to consider disturbance signals $\bm{d}(\cdot)$ defined on $[0,\infty)$, even though only their restriction to $[0,T]$ is relevant for finite-time safety verification.
\end{remark}

Given a measurable input $d(\cdot)\in\mathcal M$ and an initial state $\bm{x}_0\in\mathbb{R}^n$,
the induced trajectory of system \eqref{sys} is denoted by
$\bm{\phi}^{\bm{d}}_{\bm{x}_0}(\cdot):[0,\tau_{\bm{x}_0})\to\mathbb{R}^n$,
where $[0,\tau_{\bm{x}_0})$ is the maximal interval of existence of the solution
with initial condition $\bm{\phi}^{\bm{d}}_{\bm{x}_0}(0)=\bm{x}_0$. Here, $\tau_{\bm{x}_0}\in(0,\infty]$ denotes the supremum of all times
for which the solution exists.

We now introduce the finite-time safety verification problem. Given an open and bounded safe set $\mathcal{S}\subseteq \mathbb{R}^n$ and an initial set $\mathcal{X}_0\subseteq \mathcal{S}$ which is compact, the finite-time safety verification problem is to determine whether the system \eqref{sys}, starting from $\mathcal{X}_0$, will evolve within $\mathcal{S}$  over the time horizon $[0,T]$ with $T<\infty$, where 
\begin{equation*}
\label{set}
\begin{split}
\mathcal{S}=\{\bm{x}\in \mathbb{R}^n\mid h(\bm{x})< 0\} \text{~and~} \partial \mathcal{S}=\{\bm{x}\in \mathbb{R}^n\mid h(\bm{x})=0\}
\end{split}
\end{equation*}
with $h(\cdot)\colon \mathbb{R}^n\rightarrow \mathbb{R}$ being locally Lipschitz.

\begin{definition}[Finite-time Safety Verification]\label{RNS}
Given system \eqref{sys}, a bounded and open safe set $\mathcal{S}\subseteq \mathbb{R}^n$, a compact initial set $\mathcal{X}_0\subseteq \mathcal{S}$, and a time interval $[0,T]$ with $T<\infty$, the system \eqref{sys} is safe over the time horizon $[0,T]$, if starting from the initial set $\mathcal{X}_0$, it will evolve within the safe set $\mathcal{S}$ over the time horizon $[0,T]$, regardless of disturbances $\bm{d}(\cdot)\in \mathcal{M}$, i.e.,   
\[ \forall \bm{x}_0\in \mathcal{X}_0. \forall \bm{d}\in \mathcal{M}. \forall t\in [0,T].  \bm{\phi}_{\bm{x}_0}^{\bm{d}}(t) \in \mathcal{S}.\]
The finite-time safety verification problem is to verify whether the system \eqref{sys} is safe over the time horizon $[0,T]$.
\end{definition}

The finite-time safety verification problem can be addressed via a time-dependent barrier certificate $v(\bm{x},t)$.
\begin{definition}[Time-Dependent Barrier Certificates]
A continuously differentiable function $v(\cdot,\cdot)\colon \mathbb{R}^n \times [0,T]\rightarrow \mathbb{R}$ is said to be a time-dependent barrier certificate if the following condition is satisfied:
\begin{equation}
\label{Finite_safe}
    \begin{cases}
        v(\bm{x},0)< 0, & \forall \bm{x}\in \mathcal{X}_0,\\
        v(\bm{x},t)\geq 0,    & \forall (\bm{x},t)\in \partial \mathcal{S}\times [0,T],\\
        \mathcal{L}^{\bm{d}}v(\bm{x},t)\leq 0, & \forall (\bm{x},t) \in \overline{\mathcal{S}}\times [0,T], \ \forall \bm{d}\in \mathcal{D},
    \end{cases}
    \end{equation}
  where, for a given $\bm d\in\mathcal D$, $\mathcal{L}^{\bm{d}}v(\bm{x},t):= \frac{\partial v(\bm{x},t)}{\partial t}
+ \frac{\partial v(\bm{x},t)}{\partial \bm{x}}\bm{f}(\bm{x},\bm{d})$.
\end{definition}

\begin{proposition}
\label{prop1}
If there exists a  time-dependent barrier certificate $v$ satisfying \eqref{Finite_safe}, the system \eqref{sys} is safe over the time horizon $[0,T]$.
\end{proposition}

The proof of Proposition \ref{prop1} is shown in Appendix A. However, the converse problem remains unanswered. That is, if the system \eqref{sys} is safe over the finite time horizon $[0,T]$, is there a time-dependent barrier certificate $v$ that satisfies \eqref{Finite_safe}? In this paper, we will give a positive answer to this question.

\section{Converse Theorem for Safety Verification}
\label{problem_solving}
In this section, we present our methodology to demonstrate the existence of a time-dependent barrier certificate if the system \eqref{sys} is safe over the time horizon $[0,T]$. Our approach is rooted in the framework of Hamilton-Jacobi reachability, which investigates reachability analysis through the use of a Hamilton-Jacobi equation.

The construction of the Hamilton-Jacobi equation follows the one in \cite{xue2019inner}. Since the function $\bm{f}$ is locally Lipschitz with respect to $\bm{x}$, uniformly over $\bm{d}\in \mathcal{D}$, we first construct an auxiliary vector field $\bm{F}(\cdot,\cdot)\colon \mathbb{R}^n\times \mathcal{D}\rightarrow \mathbb{R}^n$ as in \cite{xue2019inner,xue2019robust}, which is required to satisfy
\[\bm{f}(\bm{x},\bm{d})=\bm{F}(\bm{x},\bm{d})=\bm{f}_1(\bm{x})+\bm{f}_2(\bm{x})\bm{d}, \forall \bm{x}\in \overline{\mathcal{S}}, \forall \bm{d}\in \mathcal{D}.\]
Additionally, it is globally Lipschitz in $\bm{x}$, uniformly over $\bm{d}\in \mathcal{D}$, i.e., there exists a constant $L_{F}$ such that for $\bm{x}_1,\bm{x}_2\in \mathbb{R}^n$, 
\[\|\bm{F}(\bm{x}_1,\bm{d})-\bm{F}(\bm{x}_2,\bm{d})\|\leq L_{F}\|\bm{x}_1-\bm{x}_2\|, \forall \bm{d}\in \mathcal{D}.\]
This Lipschitz condition implies that there exists a constant $C>0$ such that $\|\bm{F}(\bm{x},\bm{d})\|\leq C(1+\|\bm{x}\|)$. This is obtained as follows:  since $\|\bm{F}(\bm{x},\bm{d})-\bm{F}(\bm{0},\bm{d})\|\leq L_{F}\|\bm{x}\|, \forall \bm{d}\in \mathcal{D}$, $\|\bm{F}(\bm{x},\bm{d})\|\leq \max\{L_F, \max_{\bm{d}\in \mathcal{D}}\bm{F}(\bm{0},\bm{d})\} (1+\|\bm{x}\|)$ holds and thus the conclusion holds. For this, we construct $\bm{F}$ by \emph{separately extending} $\bm{f}_1$ and $\bm{f}_2$ from $\overline{\mathcal{S}}$ to $\mathbb{R}^n$ using standard Lipschitz extension techniques (e.g., Kirszbraun’s theorem) and then defining
\[
\bm{F}(\bm{x},\bm{d}) := \bm{F}_1(\bm{x}) + \bm{F}_2(\bm{x})\bm{d}.
\]  
This construction guarantees that $\bm{F}$ is globally Lipschitz in $\bm{x}$ while preserving the affine structure in $\bm{d}$. This extension is possible because the safe set $\mathcal{S}$ is open and bounded, 
and hence its closure $\overline{\mathcal{S}}$ is compact. A formal statement of this result is provided in Proposition~\ref{exl} in Appendix~B. Therefore, we assume the availability of this function by default.

Given a measurable input $\bm{d}(\cdot)\in \mathcal{M}$ and an initial state $\bm{x}_0\in \mathbb{R}^n$, the induced trajectory of the system $\dot{\bm{x}}=\bm{F}(\bm{x},\bm{d})$ is denoted by $\bm{\psi}_{\bm{x}_0}^{\bm{d}}(\cdot): [0,\infty)\rightarrow \mathbb{R}^n$ with $\bm{\psi}_{\bm{x}_0}^{\bm{d}}(0)=\bm{x}_0$.
\begin{lemma}
 Given the safe set $\mathcal{S}$, initial set $\mathcal{X}_0$, and finite-time horizon $[0,T]$, the system \eqref{sys} is safe over the time horizon $[0,T]$ if and only if the system $\dot{\bm{x}}=\bm{F}(\bm{x},\bm{d})$ is safe over the time horizon $[0,T]$. 
\end{lemma}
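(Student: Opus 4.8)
The plan is to exploit the single structural fact relating the two systems --- $\bm{f}(\bm{x},\bm{d})=\bm{F}(\bm{x},\bm{d})$ for every $(\bm{x},\bm{d})\in\overline{\mathcal{S}}\times\mathcal{D}$ --- together with uniqueness of solutions of ODEs with (locally) Lipschitz right-hand sides. The governing observation is that, for a fixed initial state $\bm{x}_0$ and a fixed disturbance signal $\bm{d}(\cdot)\in\mathcal{M}$, the trajectories $\bm{\phi}_{\bm{x}_0}^{\bm{d}}$ and $\bm{\psi}_{\bm{x}_0}^{\bm{d}}$ coincide for exactly as long as they remain inside $\overline{\mathcal{S}}$; and ``safe over $[0,T]$'' means precisely that the trajectory never leaves the open set $\mathcal{S}\subseteq\overline{\mathcal{S}}$. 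So the two safety notions must be forced to agree. I would prove the two implications by the same continuation (bootstrap) argument, run once in each direction; uniqueness is available because $\bm{f}$ is locally Lipschitz in $\bm{x}$ uniformly in $\bm{d}$ (Assumption 1) and $\bm{F}$ is globally Lipschitz.

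For the direction ``$\dot{\bm{x}}=\bm{F}(\bm{x},\bm{d})$ safe $\Rightarrow$ \eqref{sys} safe'', fix $\bm{x}_0\in\mathcal{X}_0$ and $\bm{d}(\cdot)\in\mathcal{M}$, let $[0,\tau_{\bm{x}_0})$ be the maximal interval of existence of $\bm{\phi}:=\bm{\phi}_{\bm{x}_0}^{\bm{d}}$, and put
\[
t^{\ast}:=\sup\bigl\{t\in[0,\min\{\tau_{\bm{x}_0},T\}]:\bm{\phi}(s)\in\overline{\mathcal{S}}\ \text{for all}\ s\in[0,t]\bigr\}.
\]
On $[0,t^{\ast})$ the trajectory $\bm{\phi}$ stays in $\overline{\mathcal{S}}$, hence solves $\dot{\bm{x}}=\bm{f}(\bm{x},\bm{d})=\bm{F}(\bm{x},\bm{d})$; since $\bm{\psi}:=\bm{\psi}_{\bm{x}_0}^{\bm{d}}$ solves the same equation with the same initial value, uniqueness gives $\bm{\phi}\equiv\bm{\psi}$ on $[0,t^{\ast})$, and by the assumed safety $\bm{\psi}$ --- hence $\bm{\phi}$ --- remains in the bounded set $\mathcal{S}$ there. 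It then remains to show $t^{\ast}=T$ and $\tau_{\bm{x}_0}>T$. If $t^{\ast}<T$, then necessarily $t^{\ast}<\tau_{\bm{x}_0}$ (a finite escape time would force the trajectory to be unbounded near it, contradicting its boundedness on $[0,t^{\ast})$); by continuity $\bm{\phi}(t^{\ast})=\bm{\psi}(t^{\ast})\in\mathcal{S}$, and since $\mathcal{S}$ is open, $\bm{\phi}$ would stay in $\mathcal{S}\subseteq\overline{\mathcal{S}}$ on a strictly larger interval, contradicting the definition of $t^{\ast}$. Hence $t^{\ast}=T$; the same ``no finite escape while bounded'' reasoning upgrades $\tau_{\bm{x}_0}\geq T$ to $\tau_{\bm{x}_0}>T$, so $\bm{\phi}\equiv\bm{\psi}$ on all of $[0,T]$ and $\bm{\phi}(t)\in\mathcal{S}$ for every $t\in[0,T]$.

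The converse direction is the mirror image: assuming \eqref{sys} is safe, fix $\bm{x}_0$ and $\bm{d}(\cdot)$, recall that $\bm{\psi}_{\bm{x}_0}^{\bm{d}}$ is defined on all of $[0,\infty)$ (because $\bm{F}$ has at most linear growth, as established just above the statement), and run the same bootstrap with the roles of $\bm{\phi}$ and $\bm{\psi}$ interchanged to obtain $\bm{\psi}_{\bm{x}_0}^{\bm{d}}\equiv\bm{\phi}_{\bm{x}_0}^{\bm{d}}$ on $[0,T]$ and hence $\bm{\psi}_{\bm{x}_0}^{\bm{d}}(t)\in\mathcal{S}$ there. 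The only genuinely delicate step in either direction is the bookkeeping around the maximal interval of existence of the $\bm{f}$-trajectory and the verification that it remains in $\overline{\mathcal{S}}$ exactly on the interval where $\bm{f}$ and $\bm{F}$ may be substituted for one another; once that is in place, the rest is the standard ODE uniqueness theorem applied on the common interval.
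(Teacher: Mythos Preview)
Your proof is correct and follows essentially the same approach as the paper's: both hinge on the observation that, because $\bm{f}=\bm{F}$ on $\overline{\mathcal{S}}$, the two trajectories coincide on the maximal interval during which they remain in $\overline{\mathcal{S}}$, with the paper phrasing this via the first hitting time $\tau=\inf\{t:\bm{\phi}_{\bm{x}_0}^{\bm{d}}(t)\in\partial\mathcal{S}\}$ and you via the last-time-in-$\overline{\mathcal{S}}$ supremum $t^{\ast}$. Your version is considerably more careful about the maximal existence interval of the $\bm{f}$-trajectory and the role of the openness of $\mathcal{S}$ in the bootstrap, whereas the paper's proof is essentially a two-line sketch that leaves these details implicit.
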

\begin{proof}
Given $\bm{d}(\cdot)\in \mathcal{M}$ and $\bm{x}_0\in \mathcal{X}_0$, we denote $\tau=\inf\{t\mid \bm{\phi}_{\bm{x}_0}^{\bm{d}}(t)\in \partial \mathcal{S} \wedge t\leq T\}$, where $\bm{x}_0\in \mathcal{X}_0$. Since $\bm{f}(\bm{x},\bm{d})=\bm{F}(\bm{x},\bm{d}), \forall \bm{x}\in \overline{\mathcal{S}}, \forall \bm{d}\in \mathcal{D}$, we obtain $\bm{\phi}_{\bm{x}_0}^{\bm{d}}(t)=\bm{\psi}_{\bm{x}_0}^{\bm{d}}(t)$ for $t\in [0,\tau]$. Thus, we have the conclusion.  
\end{proof}
Additionally, since $\bm{f}(\bm{x},\bm{d})=\bm{F}(\bm{x},\bm{d}), \forall \bm{x}\in \overline{\mathcal{S}}, \forall \bm{d}\in \mathcal{D}$, we have that a function $v(\cdot,\cdot)\colon \mathbb{R}^n \times [0,T]$ is a time-dependent barrier certificate for the system \eqref{sys} if and only if it is a time-dependent barrier certificate  for the system $\dot{\bm{x}}=\bm{F}(\bm{x},\bm{d})$. Thus, we use the system 
\begin{equation}
\label{sys1}
  \dot{\bm{x}}=\bm{F}(\bm{x},\bm{d}), \forall \bm{d}\in \mathcal{D}
\end{equation}
to show the existence of a time-dependent barrier certificate if the system \eqref{sys} is
safe over $[0,T]$. 

Our method begins with a value function defined for all  $\bm{x}\in \mathbb{R}^n$ and $t\in [0,T]$ as:
 \begin{equation}
 \label{value}
     V(\bm{x},t):=\sup_{\bm{d}\in \mathcal{M}}\sup_{\tau \in [t,T]}h(\bm{\psi}^{\bm{d}}_{\bm{x}}(\tau-t)). 
 \end{equation}  
\begin{lemma}
\label{continuous}
The value function $V$ in \eqref{value}  is locally Lipschitz continuous on $\mathbb{R}^n \times [0,T]$.
\end{lemma}
\begin{proof}
The conclusion can be justified via following the proof of Lemma 1 in \cite{xue2019inner}. Its proof is also presented in Appendix A.
\end{proof}
 \begin{lemma}
 \label{equiv}
The system \eqref{sys1}, starting from the state $\bm{x}_0 \in \mathcal{X}_0$ at $t=0$, is safe over the time horizon $[0,T]$ if and only if $V(\bm{x}_0,0)<0$.
 \end{lemma}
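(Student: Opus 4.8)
The plan is to unfold the definition of the value function $V(\bm{x}_0,0) = \sup_{\bm{d}\in\mathcal{M}}\sup_{\tau\in[0,T]} h(\bm{\psi}^{\bm{d}}_{\bm{x}_0}(\tau))$ and compare it directly against the safety condition in Definition \ref{RNS}, using the characterization $\mathcal{S} = \{\bm{x}\mid h(\bm{x})<0\}$. The argument splits into the two implications of the iff.

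For the ``if'' direction, assume $V(\bm{x}_0,0)<0$. Then for every $\bm{d}\in\mathcal{M}$ and every $\tau\in[0,T]$ we have $h(\bm{\psi}^{\bm{d}}_{\bm{x}_0}(\tau)) \le V(\bm{x}_0,0) < 0$, hence $\bm{\psi}^{\bm{d}}_{\bm{x}_0}(\tau)\in\mathcal{S}$; since this holds for all $\bm{x}_0\in\mathcal{X}_0$ this is exactly safety of \eqref{sys1}. For the ``only if'' direction, I would argue by contraposition: suppose $V(\bm{x}_0,0)\ge 0$. The subtlety is that the supremum need not be attained, so $V(\bm{x}_0,0)\ge 0$ gives, a priori, only a sequence $\bm{d}_k\in\mathcal{M}$, $\tau_k\in[0,T]$ with $h(\bm{\psi}^{\bm{d}_k}_{\bm{x}_0}(\tau_k)) \to c \ge 0$, which yields trajectory points approaching $\overline{\mathcal{S}}^c \cup \partial\mathcal{S}$ but not obviously a genuine violation. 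Here I would use the Lipschitz continuity of $V$ in $(\bm{x},t)$ from Lemma \ref{continuous} together with a compactness argument: the trajectories $\bm{\psi}^{\bm{d}}_{\bm{x}_0}$ stay in a compact set on $[0,T]$ (by the linear growth bound $\|\bm{F}(\bm{x},\bm{d})\|\le C(1+\|\bm{x}\|)$ and Grönwall), and by the standard fact that $\mathcal{M}$ is compact in the weak-$*$ / relaxed-controls topology with $\bm{d}\mapsto\bm{\psi}^{\bm{d}}_{\bm{x}_0}(\cdot)$ continuous (this uses convexity of $\mathcal{D}$, which is exactly why Assumption 1 requires it), the double supremum in \eqref{value} is actually attained by some $\bm{d}^*\in\mathcal{M}$, $\tau^*\in[0,T]$. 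Then $h(\bm{\psi}^{\bm{d}^*}_{\bm{x}_0}(\tau^*))\ge 0$ means $\bm{\psi}^{\bm{d}^*}_{\bm{x}_0}(\tau^*)\notin\mathcal{S}$, so the system is not safe from $\bm{x}_0$.

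Actually, a cleaner route that sidesteps attainment of the supremum: if $V(\bm{x}_0,0)\ge0$, pick $\bm{d}_k,\tau_k$ with $h(\bm{\psi}^{\bm{d}_k}_{\bm{x}_0}(\tau_k))\to c\ge0$; since $\{\bm{d}\colon h(\bm{\psi}^{\bm{d}}_{\bm{x}_0}(\tau))\}$ ranges over a set whose closure contains $c\ge0$, and since $\mathcal{S}$ is \emph{open}, a trajectory point with $h$-value in $(-\varepsilon,\infty)$ for small $\varepsilon$ need not leave $\mathcal{S}$ — so this shortcut does not quite work without the compactness step, confirming that the attainment argument is the crux. Hence the main obstacle is precisely establishing that the supremum over $\bm{d}\in\mathcal{M}$ is attained, i.e., a measurable-selection / compactness-of-relaxed-controls argument; once that is in hand both implications are immediate from the definitions. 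I would state the attainment as the key sub-lemma and prove it using Filippov's lemma or the compactness of the reachable ``funnel'' $\{\bm{\psi}^{\bm{d}}_{\bm{x}_0}(\tau)\colon \bm{d}\in\mathcal{M},\ \tau\in[0,T]\}$ (closed and bounded under the stated assumptions), on which the continuous function $h$ attains its maximum.
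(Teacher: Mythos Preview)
Your proposal is correct and follows essentially the same route as the paper's proof: the ``if'' direction is immediate from the definition of the supremum, and for the ``only if'' direction the paper, like you, reduces everything to attainment of the outer supremum over $\bm{d}\in\mathcal{M}$ (which it dispatches by a one-line citation to \cite{bardi1997optimal} rather than the weak-$*$/Filippov compactness sketch you give) and then uses continuity in $\tau$ on the compact interval $[0,T]$. Your identification of this attainment step as the crux, and of the role of convexity of $\mathcal{D}$, matches the paper's argument precisely.
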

 \begin{proof}
  Given an initial state $\bm{x}_0\in \mathcal{X}_0$, if the system \eqref{sys1} is safe over the time horizon $[0,T]$, we have $\bm{\psi}_{\bm{x}_0}^{\bm{d}}(t) \in \mathcal{S}, \forall (t,\bm{d})\in [0,T] \times \mathcal{M}$.
This implies $h(\bm{\psi}_{\bm{x}_0}^{\bm{d}}(t))<0, \forall (t,\bm{d})\in  [0,T] \times \mathcal{M}$. Let $M=V(\bm{x}_0,0)$. Since $\mathcal{D}$ is compact and the convex hull of the set $\bm{F}(\bm{x},\mathcal{D})$ is equal to  $\bm{F}(\bm{x},\mathcal{D})$ for $\bm{x}\in \overline{\mathcal{S}}$, the set of trajectories for the system \eqref{sys1} starting from $\bm{x}_0$ is closed with respect to the uniform convergence over a bounded interval \cite{bardi1997optimal}. Consequently, there exists $\bm{d}_1\in \mathcal{M}$ such that $M=\sup_{t\in [0,T]}h(\bm{\psi}^{\bm{d}_1}_{\bm{x}_0}(t))$. Furthermore, since $h(\bm{\psi}^{\bm{d}_1}_{\bm{x}_0}(t))$ is continuous on $t\in [0,T]$, we have that there exists $\tau\in [0,T]$ such that $M=h(\bm{\psi}^{\bm{d}_1}_{\bm{x}_0}(\tau))$. Thus, $M<0$, which implies $V(\bm{x}_0,0)<0$. 

On the other hand, if $V(\bm{x}_0,0)<0$, we have $\bm{\psi}_{\bm{x}_0}^{\bm{d}}(t) \in \mathcal{S}, \forall (t,\bm{d})\in [0,T] \times \mathcal{M}$ and consequently the system \eqref{sys1} is safe over the time horizon $[0,T]$.
 \end{proof}

\begin{lemma}
 \label{equiv1}
 The system \eqref{sys1}, starting from  $\mathcal{X}_0$ at $t=0$, is safe over the time horizon $[0,T]$ if and only if there exists $\delta>0$ such that  $V(\bm{x}_0,0)\leq -\delta, \forall \bm{x}_0\in \mathcal{X}_0$.
\end{lemma}
\begin{proof}
If there exists $\delta>0$ such that  $V(\bm{x}_0,0)\leq -\delta, \forall \bm{x}_0\in \mathcal{X}_0$, we can obtain that the system \eqref{sys1}, starting from  $\mathcal{X}_0$ at $t=0$, is safe over the time horizon $[0,T]$. 

On the other hand, if the system \eqref{sys1}, starting from  $\mathcal{X}_0$ at $t=0$, is safe over the time horizon $[0,T]$, we have $V(\bm{x}_0,0)<0, \forall \bm{x}_0\in \mathcal{X}_0$ according to Lemma \ref{equiv}. Furthermore, from Lemma \ref{continuous}  and the fact that $\mathcal{X}_0$ is compact, we have that there exists $\delta>0$ such that  
$V(\bm{x}_0,0)\leq -\delta, \forall \bm{x}_0\in \mathcal{X}_0$.
The proof is completed.
\end{proof}
 The value function $V$ in \eqref{value} is the unique Lipschitz viscosity solution to a Hamilton-Jacobi type equation. 
\begin{proposition}
\label{HJ}
  The value function \eqref{value} is the unique Lipschitz viscosity solution  to the following Hamilton-Jacobi equation: for $(t,\bm{x})\in [0,T]\times \mathbb{R}^n$, 
  \begin{equation}
  \label{hami}
  \begin{cases}
\max \big\{h(\bm{x})-V(\bm{x},t), \frac{\partial V(\bm{x},t)}{\partial t}+\sup_{\bm{d}\in \mathcal{D}}\frac{\partial V(\bm{x},t)}{\partial \bm{x}}\bm{F}(\bm{x},\bm{d})\big\}=0,\\
V(\bm{x},T)=h(\bm{x}).
\end{cases}
\end{equation}
\end{proposition}
\begin{proof}
    The conclusion can be obtained via following the proof of Theorem 2 in \cite{xue2019inner}. Its proof is also presented in Appendix A.
\end{proof}
Via removing the maximum  operator in Equation \eqref{hami} and using the fact that $\bm{f}(\bm{x},\bm{d})=\bm{F}(\bm{x},\bm{d}), \forall \bm{x}\in \overline{\mathcal{S}}, \forall \bm{d}\in \mathcal{D}$, we can obtain a set of constraints, which admits a continuously differentiable solution if and only if the system \eqref{sys} is safe over the time horizon $[0,T]$. 
 \begin{lemma}
 \label{converse_pre}
   The system \eqref{sys} is safe over the time horizon $[0,T]$ if and only if there exists a continuously differentiable function $v(\cdot,\cdot): \mathbb{R}^n\times [0,T]\rightarrow \mathbb{R}$ satisfying 
    \begin{equation}
    \label{sufficient}
        \begin{cases}
             v(\bm{x},0)<0, & \forall \bm{x}\in \mathcal{X}_0,\\
            h(\bm{x})-v(\bm{x},t)\leq  0, & \forall (\bm{x},t)\in \overline{\mathcal{S}} \times [0,T],\\
           \mathcal{L}^{\bm{d}} v(\bm{x},t)\leq 0, & \forall (\bm{x},t)\in \overline{\mathcal{S}} \times [0,T], \forall \bm{d}\in \mathcal{D},
        \end{cases}
    \end{equation}
    where, for a given $\bm d\in\mathcal D$, $\mathcal{L}^{\bm{d}} v(\bm{x},t):=\frac{\partial v(\bm{x},t)}{\partial t}+\frac{\partial v(\bm{x},t)}{\partial \bm{x}}\bm{f}(\bm{x},\bm{d})$.
 \end{lemma}
\begin{proof}
From the constraint $h(\bm{x})-v(\bm{x},t)\leq  0, \forall (\bm{x},t)\in \overline{\mathcal{S}} \times [0,T]$, we have 
$v(\bm{x},t)\geq 0, \forall (\bm{x},t)\in \partial \mathcal{S}\times [0,T]$. Thus, if there exists a continuously differentiable function $v$ satisfying \eqref{sufficient}, the system \eqref{sys} starting from the initial set $\mathcal{X}_0$ cannot touch the boundary  $\partial \mathcal{S}$ over $[0,T]$ and thus the system \eqref{sys} is safe over the time horizon $[0,T]$. 

On the other hand, since the system is safe over the time horizon $[0,T]$, according to Lemma \ref{equiv1}, there exists $\delta>0$ such that $\max_{\bm{x}\in \mathcal{X}_0}V(\bm{x},0)=-\delta$.  Due to $\bm{f}(\bm{x},\bm{d})=\bm{F}(\bm{x},\bm{d})$ for $\bm{x}\in \overline{\mathcal{S}}$ and $\bm{d}\in \mathcal{D}$,  and the value function $V$ in \eqref{value} is Lipschitz continuous over $\overline{\mathcal{S}}\times [0,T]$, the value function $V$ satisfies 
$\frac{\partial V(\bm{x},t)}{\partial t}+\frac{\partial V(\bm{x},t)}{\partial \bm{x}}\bm{f}(\bm{x},\bm{d})\leq 0, \forall \bm{d}\in \mathcal{D}$ for almost all  $(\bm{x},t)\in \overline{\mathcal{S}} \times [0,T]$. According to Lemma B.5 in \cite{lin1996smooth}, for arbitrary $\epsilon$ satisfying $0<\epsilon< \frac{\delta}{2(2T+3)}$, there exists a continuously differentiable function $w(\cdot,\cdot)\colon \mathbb{R}^n\times [0,T]\rightarrow \mathbb{R}$ such that 
\begin{equation*}
        \begin{cases}
            |w(\bm{x},t)-V(\bm{x},t)|\leq \epsilon, & \forall (\bm{x},t)\in \overline{\mathcal{S}}\times [0,T],\\
            \mathcal{L}^{\bm{d}} w(\bm{x},t)\leq \epsilon,& \forall (\bm{x},t) \in \overline{\mathcal{S}}\times [0,T], \forall \bm{d}\in \mathcal{D}.
        \end{cases}
    \end{equation*}
Let $w'(\bm{x},t)\colon=w(\bm{x},t)-2\epsilon t+2(T+1) \epsilon$. Obviously, $w'(\bm{x},t)-\epsilon+2\epsilon t-2T \epsilon=w(\bm{x},t)+\epsilon \geq V(\bm{x},t)$. Therefore, for $\bm{x}\in \mathcal{X}_0$, we have 
\[\begin{split}
  w'(\bm{x},0)&=w(\bm{x},0)+2(T+1)\epsilon \leq V(\bm{x},0)+\epsilon+2(T+1)\epsilon\\
  &\leq -\delta + \frac{\delta}{2(2T+3)}(2T+3)\leq -\frac{\delta}{2}.
\end{split}
\]
Since $V(\bm{x},t)\geq h(\bm{x})$ for $(\bm{x},t)\in \overline{\mathcal{S}}\times [0,T]$, we can obtain, for $(\bm{x},t)\in \overline{\mathcal{S}}\times [0,T]$,  
\[\begin{split}
    h(\bm{x})-w'(\bm{x},t)&=h(\bm{x})-w(\bm{x},t)+2\epsilon t-2(T+1) \epsilon\\&
    \leq h(\bm{x})-V(\bm{x},t)+\epsilon+2\epsilon t-2(T+1) \epsilon\\
    &\leq \epsilon-2\epsilon=-\epsilon.
\end{split}\]
Furthermore, we can obtain, for $(\bm{x},t) \in \overline{\mathcal{S}}\times [0,T]$, 
$\mathcal{L}^{\bm{d}} w'(\bm{x},t)=\mathcal{L}^{\bm{d}}w(\bm{x},t)-2\epsilon\leq \epsilon-2\epsilon=-\epsilon, \forall \bm{d}\in \mathcal{D}$.
Therefore, $w'(\cdot,\cdot)\colon \mathbb{R}^n\times [0,T] \rightarrow \mathbb{R}$ satisfies \eqref{sufficient}.
\end{proof}

If $v(\cdot,\cdot)\colon \mathbb{R}^n\times [0,T] \rightarrow \mathbb{R}$ satisfies $h(\bm{x})-v(\bm{x},t)\leq  0, \forall (\bm{x},t)\in \overline{\mathcal{S}} \times [0,T]$, it satisfies $v(\bm{x},t)\geq 0,  \forall (\bm{x},t)\in \partial \mathcal{S} \times [0,T]$. Thus, if there exists a continuously differentiable function $v(\cdot,\cdot): \mathbb{R}^n\times [0,T]\rightarrow \mathbb{R}$ satisfying \eqref{sufficient}, it is a time-dependent barrier certificate. Therefore, we obtain the most important result in this paper, which is formulated in Theorem \ref{converse}. 
\begin{theorem}
\label{converse}
The system \eqref{sys} is safe  over the time horizon $[0,T]$  if and only if there exists a time-dependent barrier certificate. 
\end{theorem}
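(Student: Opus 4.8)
The plan is to obtain Theorem~\ref{converse} as an essentially immediate consequence of the preceding Corollary, so the bulk of the work has already been done in the excerpt. The strategy has two directions, matching the ``if and only if''. For the sufficiency direction (existence of a time-dependent barrier certificate implies safety), I would invoke the Lyapunov-style argument already sketched after Definition~2: given $v(\bm{x},t)$ satisfying \eqref{Finite_safe}, suppose for contradiction that some trajectory $\bm{\phi}_{\bm{x}_0}^{\bm{d}}$ with $\bm{x}_0\in\mathcal{X}_0$ leaves $\mathcal{S}$ before time $T$. By continuity of the trajectory and since $\mathcal{X}_0\subseteq\mathcal{S}$, there is a first time $\tau\in(0,T]$ at which $\bm{\phi}_{\bm{x}_0}^{\bm{d}}(\tau)\in\partial\mathcal{S}$, and on $[0,\tau]$ the trajectory stays in $\overline{\mathcal{S}}$. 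Then $t\mapsto v(\bm{\phi}_{\bm{x}_0}^{\bm{d}}(t),t)$ is absolutely continuous with derivative $\mathcal{L}v\le 0$ a.e.\ on $[0,\tau]$, hence $v(\bm{\phi}_{\bm{x}_0}^{\bm{d}}(\tau),\tau)\le v(\bm{x}_0,0)<0$, contradicting $v\ge 0$ on $\partial\mathcal{S}\times[0,T]$. Therefore the system is safe.

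For the necessity direction (safety implies existence of a time-dependent barrier certificate), I would simply appeal to the Corollary immediately preceding the theorem: if the system \eqref{sys} is safe over $[0,T]$, that Corollary produces a continuously differentiable $v(\bm{x},t)\colon\overline{\mathcal{S}}\times[0,T]\to\mathbb{R}$ satisfying \eqref{sufficient}. The only remaining gap is to check that any such $v$ is in fact a time-dependent barrier certificate in the sense of \eqref{Finite_safe}. The first and third conditions of \eqref{Finite_safe} coincide verbatim with the first and third conditions of \eqref{sufficient}. For the second condition, I would use the observation recorded just before the theorem: on $\partial\mathcal{S}$ we have $h(\bm{x})=0$, so the constraint $h(\bm{x})-v(\bm{x},t)\le 0$ on $\overline{\mathcal{S}}\times[0,T]$ forces $v(\bm{x},t)\ge h(\bm{x})=0$ for all $(\bm{x},t)\in\partial\mathcal{S}\times[0,T]$, which is exactly condition two of \eqref{Finite_safe}. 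Hence $v$ satisfies \eqref{Finite_safe} and is a time-dependent barrier certificate.

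One minor technical point worth addressing explicitly is the domain: Definition~2 asks for a barrier certificate defined on $\mathbb{R}^n\times[0,T]$, whereas the Corollary only yields one on $\overline{\mathcal{S}}\times[0,T]$. I would resolve this either by noting that the conditions in \eqref{Finite_safe} only ever reference values of $v$ on $\mathcal{X}_0\times\{0\}$, on $\partial\mathcal{S}\times[0,T]$, and on $\overline{\mathcal{S}}\times\mathcal{D}\times[0,T]$, all contained in $\overline{\mathcal{S}}\times[0,T]$, so the restriction is harmless; or, if a global function is insisted upon, by extending $v$ from the compact set $\overline{\mathcal{S}}\times[0,T]$ to all of $\mathbb{R}^n\times[0,T]$ as a $C^1$ function via a smooth extension theorem (e.g.\ Whitney/Tietze-type extension), which does not affect any of the three defining conditions. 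I do not expect any genuine obstacle here; the substantive analytic content — the value-function characterization of safety (Lemma~\ref{equiv}, Corollary~\ref{equiv1}), the Hamilton--Jacobi characterization, and the smoothing via Theorem~B.1 of \cite{lin1996smooth} — has already been carried out, and Theorem~\ref{converse} is the packaging of these pieces together with the elementary equivalence between the constraint set \eqref{sufficient} and the barrier-certificate conditions \eqref{Finite_safe}.
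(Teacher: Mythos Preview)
Your proposal is correct and follows essentially the same route as the paper: the paper treats Theorem~\ref{converse} as a direct corollary of the preceding Corollary, using exactly your observation that $h(\bm{x})-v(\bm{x},t)\le 0$ on $\overline{\mathcal{S}}\times[0,T]$ forces $v\ge 0$ on $\partial\mathcal{S}\times[0,T]$, so that any solution of \eqref{sufficient} is automatically a time-dependent barrier certificate. If anything you are more thorough than the paper, which simply asserts the sufficiency direction after Definition~2 without spelling out the first-exit-time argument, and which does not comment on the $\mathbb{R}^n$ versus $\overline{\mathcal{S}}$ domain discrepancy that you flag and resolve.
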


In this study, we stipulate that the safe set $\mathcal{S}$ is bounded and open, and this assumption cannot be relaxed. Specifically, if the safe set $\mathcal{S}$ were compact, the existence of a continuously differentiable, time-dependent barrier certificate ensuring system safety over the time horizon $[0,T]$ could not, in general, be guaranteed. This limitation arises from fundamental mathematical constraints in our framework. When $\mathcal{S}$ is compact, if the system \eqref{sys} starting from $\bm{x}_0\in \mathcal{X}_0$ is safe over the time horizon $[0,T]$, we can only guarantee $V(\bm{x}_0,0)\leq 0$ rather than the strict inequality $V(\bm{x}_0,0)<0$ established in Lemma \ref{equiv} for open safe sets. Consequently, we cannot guarantee the existence of $\delta>0$ such that $V(\bm{x}_0,0)\leq -\delta$ for all $\bm{x}_0\in \mathcal{X}_0$ as required in Lemma \ref{equiv1}. This uniform negativity bound is essential for constructing the continuously differentiable approximation in Lemma \ref{converse_pre}, which forms the foundation of our converse barrier certificate theorem. Therefore, the assumption that $\mathcal{S}$ is open and bounded is fundamental to our theoretical framework and cannot be relaxed to compactness while maintaining the existence guarantees for continuously differentiable barrier certificates.

\color{black}

\begin{remark} 
We observe that the system \eqref{sys} is safe over $[0,T]$ if and only if a continuously differentiable function $v(\cdot,\cdot)\colon \mathbb{R}^n \times [0,T]\rightarrow \mathbb{R}$ satisfies 
\begin{equation}
\label{Finite_safe1}
    \begin{cases}
v(\bm{x},0)< 0, & \forall \bm{x}\in \mathcal{X}_0,\\
v(\bm{x},t)\geq 0,     & \forall (\bm{x},t)\in \partial \mathcal{S}\times [0,T],\\
\mathcal{L}^{\bm{d}}v(\bm{x},t)\leq \lambda v(\bm{x},t), & \forall (\bm{x},t) \in \overline{\mathcal{S}}\times [0,T], \forall \bm{d}\in \mathcal{D},
    \end{cases}
    \end{equation}
where $\lambda\in \mathbb{R}$ is a user-defined value. If a function $v(\cdot,\cdot)\colon \mathbb{R}^n \times [0,T]\rightarrow \mathbb{R}$ satisfying \eqref{Finite_safe1} exists, the system \eqref{sys} is safe over the time horizon $[0,T]$. On the other hand, if the system \eqref{sys} is safe over the time horizon $[0,T]$,  there exists a continuously differentiable function $v$ satisfying \eqref{Finite_safe}. Via taking $v'(\bm{x},t): =e^{\lambda t} v(\bm{x},t)$, we can obtain $v'$ satisfies \eqref{Finite_safe1}. In addition, it is worth noting that when $\lambda < 0$, the function $v$ satisfying \eqref{Finite_safe1} is not required to be monotonic over the set $\{(\bm{x},t)\in \mathcal{S}\times [0,T]\mid v(\bm{x},t)<0\}$. However, it should be strictly monotonically decreasing along the system dynamics over $\{(\bm{x},t)\in \mathcal{S}\times [0,T]\mid v(\bm{x},t)>0\}$ . Conversely, when $\lambda > 0$, the requirement is exactly opposite. In the case where $\lambda = 0$, \eqref{Finite_safe1} is equivalent to \eqref{Finite_safe}, and $v$ satisfying \eqref{Finite_safe1} should be monotonically non-increasing over $\mathbb{R}^n\times [0,T]$. Thus,  the constraints corresponding to these three scenarios may exhibit different performances in safety verification practices.  
\end{remark}

\color{black}
\begin{remark}
The results presented in this work can be extended to alternative constructions that employ marginal functions as introduced in \cite{maghenem2022converse}. Please refer to Proposition \ref{thm:marginal-function} in Appendix B.
\end{remark}

\color{black}
\begin{remark}[Connection to Infinite-Time Safety via Time Augmentation]
\label{remark:time-augmentation}
The finite-time safety problem considered in this work admits a systematic reformulation as an infinite-time robust safety problem through state augmentation. Consider the extended system:
\[
\dot{\boldsymbol{z}} = \boldsymbol{g}(\boldsymbol{z},\boldsymbol{d}) = \begin{bmatrix} \boldsymbol{f}(\boldsymbol{x},\boldsymbol{d}) \\ 1 \end{bmatrix}, \quad \boldsymbol{z} = \begin{bmatrix} \boldsymbol{x} \\ \tau \end{bmatrix} \in \mathbb{R}^{n+1}
\]
with the extended safe set $\mathcal{S}_{\text{ext}} = \{(\boldsymbol{x},\tau) \in \mathbb{R}^n \times \mathbb{R} \mid \tau \in [0,T], \boldsymbol{x} \in \mathbb{R}^n\}$ and initial set $\mathcal{X}_{0,\text{ext}} = \{(\boldsymbol{x},0) \mid \boldsymbol{x} \in \mathcal{X}_0\}$. In this formulation, finite-time safety over $[0,T]$ corresponds precisely to infinite-time safety for the augmented system.

This perspective reveals important connections to established converse theorems for robust safety. Compared to \cite{maghenem2022converse}, which established the existence of lower semicontinuous and locally Lipschitz barrier functions for general differential inclusions, our framework yields \emph{continuously differentiable} barrier certificates. This enhanced regularity stems from our focus on finite-time horizons and the specific structure of perturbed deterministic systems, enabling stronger regularity results through Hamilton-Jacobi approximation techniques.  Furthermore, while \cite{10316551} introduces a strong robust-safety notion for differential inclusions with continuously differentiable barrier certificates, their conditions rely on set-valued analysis and Clarke generalized gradients that pose significant computational challenges. In contrast, our approach employs conditions that are directly amenable to established numerical methods, such as sum-of-squares programming for polynomial systems, as demonstrated in our computational example.

Our work thus complements existing converse theorems by addressing finite-time safety with enhanced regularity guarantees and improved computational tractability, while maintaining a formulation that aligns directly with practical finite-horizon safety verification problems.
\end{remark}

\section{Examples}
In this section, we demonstrate the application of our theoretical developments through one example. In this example, the function $\bm{f}$ is polynomial in $\bm{x}$ and affine in the disturbance, and the safe set is semi-algebraic, $\mathcal{S}=\{\bm{x}\in \mathbb{R}^n\mid h(\bm{x})<0\}$ with $h(\bm{x})$ a polynomial. We therefore search for polynomial barrier certificates that satisfy the sufficient conditions in \eqref{sufficient}. To do this, we encode the constraint \eqref{sufficient} as a semi-definite program (SDP) using the sum of squares (SOS) decomposition for multivariate polynomials. The resulting SDP is then solved using the tool Mosek 10.1.21 \cite{aps2019mosek}. To ensure numerical stability during the solution of this SDP, we impose a constraint on the coefficients of the unknown polynomials, specifically restricting them to the interval $[-100, 100]$. In the sequel, $\sum[\bm{y}]$ denotes the set of sum-of-squares polynomials over variables $\bm{y}$, i.e., 
\[\sum[\bm{y}]=\{p\in \mathbb{R}[\bm{y}]\mid p=\sum_{i=1}^k q^2_i(\bm{y}), q_i(\bm{y})\in \mathbb{R}[\bm{y}],i=1,\ldots,k\},\] where $\mathbb{R}[\bm{y}]$ denotes the ring of polynomials in variables $\bm{y}$.

\begin{example}
\label{ex1}
Consider a nonlinear system with disturbance inputs, adapted from \cite{xue2023reach}:
\begin{equation*}
    \begin{cases}
        \dot{x} = -0.42x - 1.05y - 2.3x^2 - 0.56xy - x^3 \\
        \dot{y} = 1.98x + xy + d
    \end{cases}
\end{equation*}
with the following specifications:
\begin{itemize}
    \item Safe set: $\mathcal{S} = \{\bm{x} \in \mathbb{R}^2 \mid h(\bm{x}) < 0\}$, where $h(\bm{x}) = x^2 + y^2 - 4$;
    \item Initial set: $\mathcal{X}_0 = \{\bm{x} \in \mathbb{R}^2 \mid h_0(\bm{x}) \leq 0\}$, where $h_0(\bm{x}) = (x - 1.2)^2 + (y - 0.8)^2 - 0.1$;  
    \item Disturbance set: $\mathcal{D} = \{d \in \mathbb{R} \mid g(d) \leq 0\}$, where $g(d) = d^2 - 10^{-4}$;
    \item Time horizon: $T = 1.5$.
\end{itemize}

To solve the finite-time safety verification problem, we formulate the SDP derived from the barrier certificate conditions in \eqref{sufficient}:
\begin{equation*}
\begin{cases}
-v(\bm{x},0) - \epsilon + s_0(\bm{x}) h_0(\bm{x}) \in \sum[\bm{x}], \\
v(\bm{x},t) - p(\bm{x},t)h(\bm{x}) + s_1(\bm{x},t)t(t-T) \in \sum[\bm{x},t], \\
-\mathcal{L}^{\bm{d}}v(\bm{x},t) + s_2(\bm{x},\bm{d},t)h(\bm{x}) + s_3(\bm{x},\bm{d},t)g(\bm{d}) + s_4(\bm{x},\bm{d},t)t(t-T) \in \sum[\bm{x},\bm{d},t], \\
v(\bm{x},t), p(\bm{x},t) \in \mathbb{R}[\bm{x},t], \quad s_0(\bm{x}) \in \sum[\bm{x}], \quad s_1(\bm{x},t) \in \sum[\bm{x},t], \\
s_i(\bm{x},\bm{d},t) \in \sum[\bm{x},\bm{d},t], \quad i=2,3,4,
\end{cases}
\end{equation*}
where $\epsilon > 0$ is a numerical tolerance parameter and $\mathcal{L}^{\bm{d}}v(\bm{x},t):= \frac{\partial v(\bm{x},t)}{\partial t} + \frac{\partial v(\bm{x},t)}{\partial \bm{x}}\bm{f}(\bm{x},\bm{d})$.

The SDP is feasible with polynomial degrees set to 4 and $\epsilon = 10^{-6}$, yielding a valid barrier certificate $v(\bm{x},t)$. The computation time is 0.91 seconds on a workstation with an Intel Core i7-7500U 2.70 GHz CPU and 32 GB of RAM running Windows 10, demonstrating the practical efficiency of our approach. This numerically certifies that the system is safe over the time horizon $[0,T]$.

\begin{figure}[htbp]
    \centering
    \includegraphics[width=0.5\textwidth]{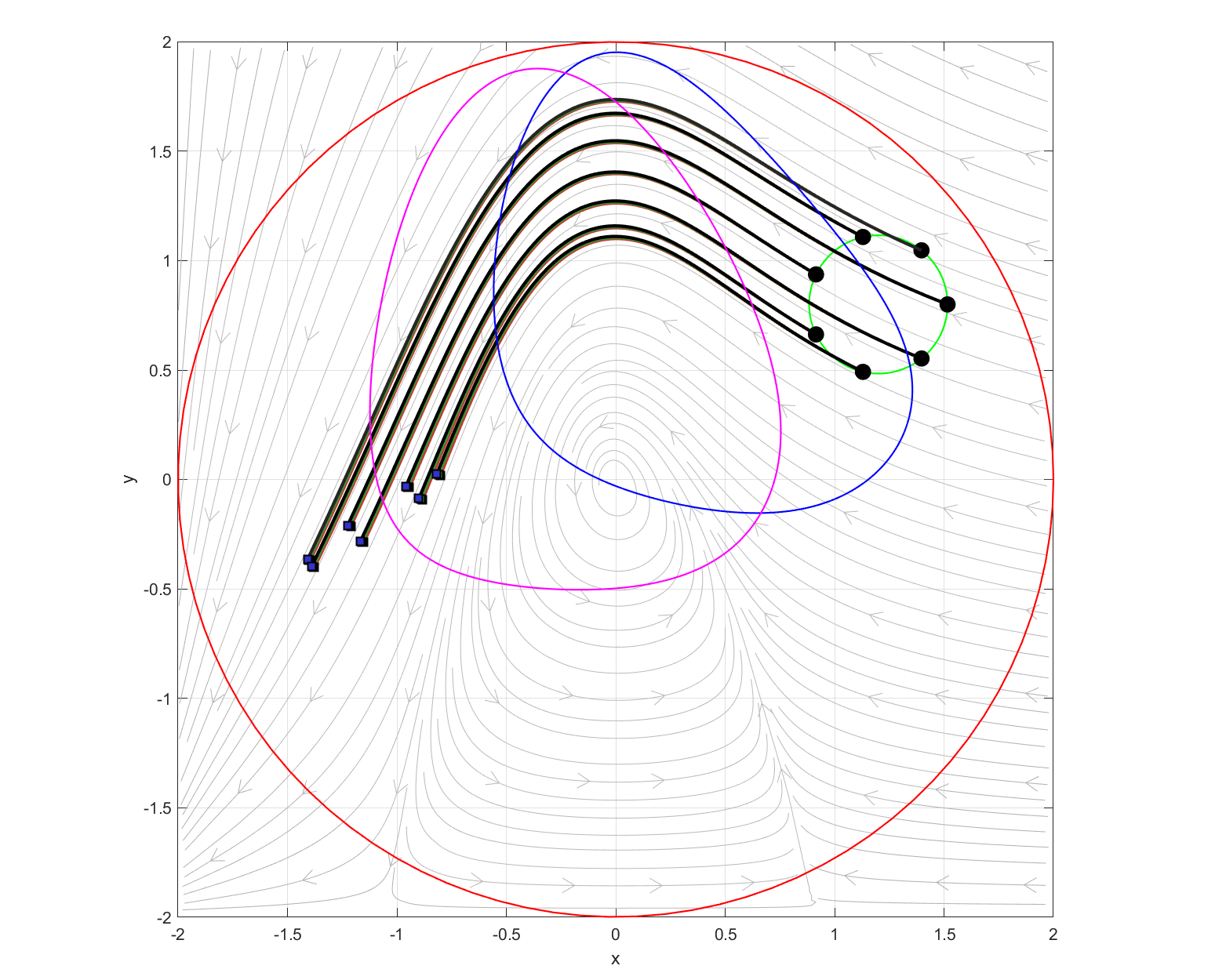} 
    \caption{Safety verification via barrier certificates for Example~\ref{ex1}. The green curve denotes the initial set boundary $\partial\mathcal{X}_0$, the red curve shows the safe set boundary $\partial\mathcal{S}$, while the blue and pink curves represent the zero level sets of the barrier certificate at $t=0.5$ and $t=1.0$, respectively. All trajectories originating from $\mathcal{X}_0$ remain within the certified safe region bounded by the barrier level sets.}
    \label{fig:ex1_traj}
\end{figure}
\end{example}

\color{black}

\section{Conclusion}
\label{con}
This paper establishes that the existence of a continuously differentiable, time-dependent barrier certificate is both a necessary and sufficient condition for guaranteeing the finite-time safety of a class of continuous-time deterministic systems subject to measurable disturbances, thereby providing a fundamental completeness guarantee for finite-time safety verification. The proof is developed within the framework of Hamilton–Jacobi reachability.

 Future work will focus on developing advanced numerical algorithms to efficiently solve the proposed barrier conditions.

\color{black}

\section*{Acknowledgements}
This work is funded by the CAS Pioneer Hundred Talents Program, Basic Research Program of  Institute of Software, CAS (Grant No. ISCAS-JCMS-202302), the NSFC under grant No. 12201037, and NRF RSS Scheme NRF-RSS2022-009.

\section*{Appendix A}
\color{red}

\color{black}

\textbf{The proof of Proposition \ref{prop1}:}
\begin{proof}

 Let $v$ be a time-dependent barrier certificate satisfying \eqref{Finite_safe}. Consider any initial state $\bm{x}_0 \in \mathcal{X}_0$, disturbance signal $\bm{d}(\cdot) \in \mathcal{M}$, and corresponding trajectory $\bm{\phi}_{\bm{x}_0}^{\bm{d}}(t)$. Define $\gamma(t) = v(\bm{\phi}_{\bm{x}_0}^{\bm{d}}(t), t)$.

Since $v$ is continuously differentiable and the trajectory is absolutely continuous, $\gamma(t)$ is absolutely continuous with derivative:
\[
\frac{d\gamma(t)}{dt} = \mathcal{L}^{\bm{d}}v(\bm{\phi}_{\bm{x}_0}^{\bm{d}}(t), t) \leq 0 \quad \text{for almost all } t \in [0,T].
\]
Thus, $\gamma(t)$ is non-increasing. From condition \eqref{Finite_safe}, $\gamma(0) = v(\bm{x}_0, 0) < 0$, so $\gamma(t) < 0$ for all $t \in [0,T]$.

Suppose for contradiction that there exists $t^* \in [0,T]$ with $\bm{\phi}_{\bm{x}_0}^{\bm{d}}(t^*) \in \partial\mathcal{S}$. Then $\gamma(t^*) = v(\bm{\phi}_{\bm{x}_0}^{\bm{d}}(t^*), t^*) \geq 0$, contradicting $\gamma(t^*) < 0$. Hence, the trajectory remains within $\mathcal{S}$ for all $t \in [0,T]$. 
\end{proof}

\textbf{The proof of Lemma \ref{continuous}:}
\begin{proof}[Proof of Lemma \ref{continuous}]
Let \((\boldsymbol{x}_0, t_0) \in \mathbb{R}^n \times [0,T]\) be an arbitrary point. We will show that there exists a neighborhood of \((\boldsymbol{x}_0, t_0)\) on which \(V\) is Lipschitz continuous.

Since the vector field \(\boldsymbol{F}\) is globally Lipschitz, it satisfies a linear growth condition: there exists \(C>0\) such that \(\|\boldsymbol{F}(\boldsymbol{x},\boldsymbol{d})\| \leq C(1+\|\boldsymbol{x}\|)\) for all \(\boldsymbol{x} \in \mathbb{R}^n\) and \(\boldsymbol{d} \in \mathcal{D}\). This implies that the trajectories of the system satisfy
\[
\|\boldsymbol{\psi}_{\boldsymbol{x}}^{\boldsymbol{d}}(s)\| \leq (\|\boldsymbol{x}\| + CT)e^{CT}, \quad \forall s \in [0,T].
\]
Let \(R = (\|\boldsymbol{x}_0\| + CT)e^{CT} + 1\). Then for any \(\boldsymbol{x}\) in the closed ball \(B(\boldsymbol{x}_0, 1)\) and \(t \in [0,T]\), the trajectory \(\boldsymbol{\psi}_{\boldsymbol{x}}^{\boldsymbol{d}}(s)\) for \(s \in [0,T]\) remains in the closed ball \(B(0, R)\). Let \(\mathcal{X} = \overline{B(0, R)}\), which is compact.

Since \(h\) is locally Lipschitz, it is Lipschitz continuous on \(\mathcal{X}\) with some constant \(L_h\). Also, by the global Lipschitz property of \(\boldsymbol{F}\), there exists \(L_F>0\) such that for all \(\boldsymbol{x}_1, \boldsymbol{x}_2 \in \mathbb{R}^n\), \(\boldsymbol{d} \in \mathcal{D}\), and \(s \in [0,T]\):
\[
\| \boldsymbol{\psi}^{\boldsymbol{d}}_{\boldsymbol{x}_1}(s) - \boldsymbol{\psi}^{\boldsymbol{d}}_{\boldsymbol{x}_2}(s) \| \leq e^{L_F s} \| \boldsymbol{x}_1 - \boldsymbol{x}_2 \| \leq e^{L_F T} \| \boldsymbol{x}_1 - \boldsymbol{x}_2 \|.
\]

Furthermore, by compactness of \(\mathcal{X} \times \mathcal{D}\), there exists \(M_F > 0\) such that \(\| \boldsymbol{F}(\boldsymbol{x}, \boldsymbol{d}) \| \leq M_F\) for all \((\boldsymbol{x}, \boldsymbol{d}) \in \mathcal{X} \times \mathcal{D}\), which implies that the flow is Lipschitz in time:
\[
\| \boldsymbol{\psi}^{\boldsymbol{d}}_{\boldsymbol{x}}(s_1) - \boldsymbol{\psi}^{\boldsymbol{d}}_{\boldsymbol{x}}(s_2) \| \leq M_F |s_1 - s_2|, \quad \forall s_1, s_2 \in [0,T].
\]

Now, consider any \((\boldsymbol{x}_1, t_1), (\boldsymbol{x}_2, t_2) \in B(\boldsymbol{x}_0, 1) \times [0,T]\). We analyze the Lipschitz continuity in both variables:

\textbf{Lipschitz Continuity in \(\boldsymbol{x}\):}

Fix \(t \in [0,T]\) and consider \(\boldsymbol{x}_1, \boldsymbol{x}_2 \in B(\boldsymbol{x}_0, 1)\). For any \(\epsilon > 0\), there exists \(\boldsymbol{d}_1 \in \mathcal{M}\) and \(\tau_1 \in [t,T]\) such that
\[
V(\boldsymbol{x}_1, t) \leq h(\boldsymbol{\psi}^{\boldsymbol{d}_1}_{\boldsymbol{x}_1}(\tau_1 - t)) + \epsilon.
\]

Then, 
\[
V(\boldsymbol{x}_1, t) - V(\boldsymbol{x}_2, t) \leq h(\boldsymbol{\psi}^{\boldsymbol{d}_1}_{\boldsymbol{x}_1}(\tau_1 - t)) - h(\boldsymbol{\psi}^{\boldsymbol{d}_1}_{\boldsymbol{x}_2}(\tau_1 - t)) + \epsilon.
\]

Since \(h\) is Lipschitz on \(\mathcal{X}\) with constant \(L_h\), we have
\[
|h(\boldsymbol{\psi}^{\boldsymbol{d}_1}_{\boldsymbol{x}_1}(\tau_1 - t)) - h(\boldsymbol{\psi}^{\boldsymbol{d}_1}_{\boldsymbol{x}_2}(\tau_1 - t))| \leq L_h \| \boldsymbol{\psi}^{\boldsymbol{d}_1}_{\boldsymbol{x}_1}(\tau_1 - t) - \boldsymbol{\psi}^{\boldsymbol{d}_1}_{\boldsymbol{x}_2}(\tau_1 - t) \|.
\]

Using the flow bound
\[
\| \boldsymbol{\psi}^{\boldsymbol{d}_1}_{\boldsymbol{x}_1}(\tau_1 - t) - \boldsymbol{\psi}^{\boldsymbol{d}_1}_{\boldsymbol{x}_2}(\tau_1 - t) \| \leq e^{L_F T} \| \boldsymbol{x}_1 - \boldsymbol{x}_2 \|,
\]
we obtain
\[
V(\boldsymbol{x}_1, t) - V(\boldsymbol{x}_2, t) \leq L_h e^{L_F T} \| \boldsymbol{x}_1 - \boldsymbol{x}_2 \| + \epsilon.
\]

By symmetry and the arbitrariness of \(\epsilon\), we have
\[
|V(\boldsymbol{x}_1, t) - V(\boldsymbol{x}_2, t)| \leq L_h e^{L_F T} \| \boldsymbol{x}_1 - \boldsymbol{x}_2 \|.
\]

\textbf{Lipschitz Continuity in \(t\):}

Fix \(\boldsymbol{x} \in B(\boldsymbol{x}_0, 1)\) and consider \(t_1, t_2 \in [0,T]\) with \(t_1 \leq t_2\). For any \(\epsilon > 0\), there exist \(\boldsymbol{d}_1 \in \mathcal{M}\) and \(\tau_1 \in [t_1, T]\) such that 
\[
V(\boldsymbol{x}, t_1) \leq h(\boldsymbol{\psi}^{\boldsymbol{d}_1}_{\boldsymbol{x}}(\tau_1 - t_1)) + \epsilon.
\]

We consider two cases:

\textbf{Case 1:} \(\tau_1 \in [t_2, T]\). Then:
\[
V(\boldsymbol{x}, t_2) \geq h(\boldsymbol{\psi}^{\boldsymbol{d}_1}_{\boldsymbol{x}}(\tau_1 - t_2)).
\]
Thus, we have
\[
V(\boldsymbol{x}, t_1) - V(\boldsymbol{x}, t_2) \leq h(\boldsymbol{\psi}^{\boldsymbol{d}_1}_{\boldsymbol{x}}(\tau_1 - t_1)) - h(\boldsymbol{\psi}^{\boldsymbol{d}_1}_{\boldsymbol{x}}(\tau_1 - t_2)) + \epsilon.
\]

\textbf{Case 2:} \(\tau_1 \in [t_1, t_2]\). Let \(\tau_2 = t_2\). Then, we have
\[
V(\boldsymbol{x}, t_2) \geq h(\boldsymbol{\psi}^{\boldsymbol{d}_1}_{\boldsymbol{x}}(\tau_2 - t_2)) = h(\boldsymbol{\psi}^{\boldsymbol{d}_1}_{\boldsymbol{x}}(0)) = h(\boldsymbol{x}).
\]
Thus, we have
\[
V(\boldsymbol{x}, t_1) - V(\boldsymbol{x}, t_2) \leq h(\boldsymbol{\psi}^{\boldsymbol{d}_1}_{\boldsymbol{x}}(\tau_1 - t_1)) - h(\boldsymbol{x}) + \epsilon.
\]

In both cases, using the Lipschitz property of \(h\) and the time-Lipschitz bound of the flow
\[
|h(\boldsymbol{\psi}^{\boldsymbol{d}_1}_{\boldsymbol{x}}(s_1)) - h(\boldsymbol{\psi}^{\boldsymbol{d}_1}_{\boldsymbol{x}}(s_2))| \leq L_h M_F |s_1 - s_2|,
\]
we obtain
\[
V(\boldsymbol{x}, t_1) - V(\boldsymbol{x}, t_2) \leq L_h M_F |t_2 - t_1| + \epsilon.
\]

Since \(\epsilon\) is arbitrary, we have
\[
|V(\boldsymbol{x}, t_1) - V(\boldsymbol{x}, t_2)| \leq L_h M_F |t_1 - t_2|.
\]

Combining both results, we obtain, for all \((\boldsymbol{x}_1, t_1), (\boldsymbol{x}_2, t_2) \in B(\boldsymbol{x}_0, 1) \times [0,T]\), 
\[
|V(\boldsymbol{x}_1, t_1) - V(\boldsymbol{x}_2, t_2)| \leq L \left( \| \boldsymbol{x}_1 - \boldsymbol{x}_2 \| + |t_1 - t_2| \right),
\]
where \(L = \max\{L_h e^{L_F T}, L_h M_F\}\). This shows that \(V\) is Lipschitz continuous on \(B(\boldsymbol{x}_0, 1) \times [0,T]\), and hence locally Lipschitz on \(\mathbb{R}^n \times [0,T]\).

This completes the proof of Lemma \ref{continuous}. 
\end{proof}

\color{black}

\textbf{The proof of Proposition \ref{HJ}:}

\begin{proof}
We establish that $V$ is the unique Lipschitz viscosity solution to the Hamilton-Jacobi equation (\ref{hami}).

\textbf{Step 1: Viscosity Solution Property}

Let $\phi \in C^{1}(\mathbb{R}^{n} \times [0,T])$ be a test function, i.e., $\phi$ is continuously differentiable on $\mathbb{R}^{n} \times [0,T]$.

\textbf{Supersolution:} Suppose $V - \phi$ attains a local minimum at $(\boldsymbol{x}_0, t_0)$. Without loss of generality, assume $V(\boldsymbol{x}_0, t_0) = \phi(\boldsymbol{x}_0, t_0)$. Then for sufficiently small $\delta > 0$ and any $\boldsymbol{d} \in \mathcal{M}$, we have
\[
V(\boldsymbol{x}_0, t_0) - \phi(\boldsymbol{x}_0, t_0) \leq V(\boldsymbol{\psi}_{\boldsymbol{x}_0}^{\boldsymbol{d}}(\delta), t_0 + \delta) - \phi(\boldsymbol{\psi}_{\boldsymbol{x}_0}^{\boldsymbol{d}}(\delta), t_0 + \delta).
\]
Rearranging and using $V(\boldsymbol{x}_0, t_0) = \phi(\boldsymbol{x}_0, t_0)$ gives
\[
\phi(\boldsymbol{x}_0, t_0) - \phi(\boldsymbol{\psi}_{\boldsymbol{x}_0}^{\boldsymbol{d}}(\delta), t_0 + \delta) \geq V(\boldsymbol{x}_0, t_0) - V(\boldsymbol{\psi}_{\boldsymbol{x}_0}^{\boldsymbol{d}}(\delta), t_0 + \delta).
\]
From the dynamic programming principle, $V(\boldsymbol{x}_0, t_0) \geq V(\boldsymbol{\psi}_{\boldsymbol{x}_0}^{\boldsymbol{d}}(\delta), t_0 + \delta)$, so the right-hand side is non-negative. Thus, we obtain
\[
\phi(\boldsymbol{x}_0, t_0) - \phi(\boldsymbol{\psi}_{\boldsymbol{x}_0}^{\boldsymbol{d}}(\delta), t_0 + \delta) \geq 0.
\]
Dividing by $\delta > 0$ and taking $\delta \to 0^+$ yields
\[
\frac{\partial \phi(\boldsymbol{x}_0,t_0)}{\partial t} + \frac{\partial \phi(\boldsymbol{x}_0,t_0)}{\partial \boldsymbol{x}}\boldsymbol{F}(\boldsymbol{x}_0,\boldsymbol{d}) \leq 0,
\]
which implies
\[
\frac{\partial \phi(\boldsymbol{x}_0,t_0)}{\partial t} + \frac{\partial \phi(\boldsymbol{x}_0,t_0)}{\partial \boldsymbol{x}}\boldsymbol{F}(\boldsymbol{x}_0,\boldsymbol{d}) \leq 0, \quad \forall \boldsymbol{d} \in \mathcal{D}.
\]
Taking supremum over $\boldsymbol{d}$ gives
\[
\frac{\partial \phi(\boldsymbol{x}_0,t_0)}{\partial t} + \sup_{\boldsymbol{d}\in\mathcal{D}}\frac{\partial \phi(\boldsymbol{x}_0,t_0)}{\partial \boldsymbol{x}}\boldsymbol{F}(\boldsymbol{x}_0,\boldsymbol{d}) \leq 0.
\]
Additionally, since $V(\boldsymbol{x},t) \geq h(\boldsymbol{x})$ by definition, we have $h(\boldsymbol{x}_0) - V(\boldsymbol{x}_0,t_0) \leq 0$. Therefore, the supersolution condition holds.

\textbf{Subsolution:} Suppose $V - \phi$ attains a local maximum at $(\boldsymbol{x}_0, t_0)$, and assume $V(\boldsymbol{x}_0, t_0) = \phi(\boldsymbol{x}_0, t_0)$. We want to show that:
\[
\max\left\{ h(\boldsymbol{x}_0) - V(\boldsymbol{x}_0,t_0), \frac{\partial \phi(\boldsymbol{x}_0,t_0)}{\partial t} + \sup_{\boldsymbol{d}\in\mathcal{D}}\frac{\partial \phi(\boldsymbol{x}_0,t_0)}{\partial \boldsymbol{x}}\boldsymbol{F}(\boldsymbol{x}_0,\boldsymbol{d}) \right\} \geq 0.
\]
Assume, by contradiction, that this is false. Then there exists $\epsilon_1, \epsilon_2 > 0$ such that
\[
h(\boldsymbol{x}_0) - V(\boldsymbol{x}_0,t_0) \leq -\epsilon_1
\]
and
\[
\frac{\partial \phi(\boldsymbol{x}_0,t_0)}{\partial t} + \sup_{\boldsymbol{d}\in\mathcal{D}}\frac{\partial \phi(\boldsymbol{x}_0,t_0)}{\partial \boldsymbol{x}}\boldsymbol{F}(\boldsymbol{x}_0,\boldsymbol{d}) \leq -\epsilon_2.
\]
By continuity of $h$, $V$, $\phi$, and $\boldsymbol{F}$, there exists $\delta'>0$ such that for all $(\boldsymbol{x},t)$ with $\|\boldsymbol{x}-\boldsymbol{x}_0\| \leq \delta'$ and $|t-t_0| \leq \delta'$, we have
\[
h(\boldsymbol{x}) - V(\boldsymbol{x},t) \leq -\epsilon_1/2
\]
and
\[
\frac{\partial \phi(\boldsymbol{x},t)}{\partial t} + \sup_{\boldsymbol{d}\in\mathcal{D}}\frac{\partial \phi(\boldsymbol{x},t)}{\partial \boldsymbol{x}}\boldsymbol{F}(\boldsymbol{x},\boldsymbol{d}) \leq -\epsilon_2/2.
\]
In particular, for any $\boldsymbol{d} \in \mathcal{D}$ and $(\boldsymbol{x},t)$ in this neighborhood,
\[
\frac{\partial \phi(\boldsymbol{x},t)}{\partial t} + \frac{\partial \phi(\boldsymbol{x},t)}{\partial \boldsymbol{x}}\boldsymbol{F}(\boldsymbol{x},\boldsymbol{d}) \leq -\epsilon_2/2.
\]

Now, by the dynamic programming principle, for any $\delta \in (0, \delta']$ such that the trajectory $\boldsymbol{\psi}_{\boldsymbol{x}_0}^{\boldsymbol{d}}(s)$ remains in the neighborhood for $s \in [t_0, t_0+\delta]$, we have
\[
V(\boldsymbol{x}_0, t_0) = \sup_{\boldsymbol{d} \in \mathcal{M}_{[t_0,t_0+\delta]}} \max\left\{ V(\boldsymbol{\psi}_{\boldsymbol{x}_0}^{\boldsymbol{d}}(t_0+\delta), t_0+\delta), \sup_{s \in [t_0,t_0+\delta]} h(\boldsymbol{\psi}_{\boldsymbol{x}_0}^{\boldsymbol{d}}(s)) \right\}.
\]
Then, for any $\epsilon > 0$, there exists $\boldsymbol{d}^* \in \mathcal{M}_{[t_0,t_0+\delta]}$ such that
\[
V(\boldsymbol{x}_0, t_0) \leq \max\left\{ V(\boldsymbol{\psi}_{\boldsymbol{x}_0}^{\boldsymbol{d}^*}(t_0+\delta), t_0+\delta), \sup_{s \in [t_0,t_0+\delta]} h(\boldsymbol{\psi}_{\boldsymbol{x}_0}^{\boldsymbol{d}^*}(s)) \right\} + \epsilon\delta.
\]

We now consider two cases:

\textbf{Case 1:} 
\[
\sup_{s \in [t_0,t_0+\delta]} h(\boldsymbol{\psi}_{\boldsymbol{x}_0}^{\boldsymbol{d}^*}(s)) \geq V(\boldsymbol{x}_0, t_0) - \epsilon\delta.
\]
Then, there exists $s_0 \in [t_0, t_0+\delta]$ such that
\[
h(\boldsymbol{\psi}_{\boldsymbol{x}_0}^{\boldsymbol{d}^*}(s_0)) \geq V(\boldsymbol{x}_0, t_0) - \epsilon\delta.
\]
But by our assumption, in the neighborhood we have $h(\boldsymbol{\psi}_{\boldsymbol{x}_0}^{\boldsymbol{d}^*}(s_0)) \leq V(\boldsymbol{\psi}_{\boldsymbol{x}_0}^{\boldsymbol{d}^*}(s_0), s_0) - \epsilon_1/2$.
Also, since $V - \phi$ has a local maximum at $(\boldsymbol{x}_0, t_0)$, we have $V(\boldsymbol{\psi}_{\boldsymbol{x}_0}^{\boldsymbol{d}^*}(s_0), s_0) \leq \phi(\boldsymbol{\psi}_{\boldsymbol{x}_0}^{\boldsymbol{d}^*}(s_0), s_0)$.
Now, consider the function $s \mapsto \phi(\boldsymbol{\psi}_{\boldsymbol{x}_0}^{\boldsymbol{d}^*}(s), s)$. Its derivative satisfies
\[
\frac{d}{ds} \phi(\boldsymbol{\psi}_{\boldsymbol{x}_0}^{\boldsymbol{d}^*}(s), s) = \frac{\partial \phi}{\partial t} + \frac{\partial \phi}{\partial \boldsymbol{x}} \boldsymbol{F}(\boldsymbol{\psi}_{\boldsymbol{x}_0}^{\boldsymbol{d}^*}(s), \boldsymbol{d}^*(s)) \leq -\epsilon_2/2.
\]
Integrating from $t_0$ to $s_0$, we get
\[
\phi(\boldsymbol{\psi}_{\boldsymbol{x}_0}^{\boldsymbol{d}^*}(s_0), s_0) - \phi(\boldsymbol{x}_0, t_0) \leq -\frac{\epsilon_2}{2} (s_0 - t_0).
\]
Since $\phi(\boldsymbol{x}_0, t_0) = V(\boldsymbol{x}_0, t_0)$, we have
\[
V(\boldsymbol{\psi}_{\boldsymbol{x}_0}^{\boldsymbol{d}^*}(s_0), s_0) \leq \phi(\boldsymbol{\psi}_{\boldsymbol{x}_0}^{\boldsymbol{d}^*}(s_0), s_0) \leq V(\boldsymbol{x}_0, t_0) - \frac{\epsilon_2}{2} (s_0 - t_0).
\]
Therefore,
\[
h(\boldsymbol{\psi}_{\boldsymbol{x}_0}^{\boldsymbol{d}^*}(s_0)) \leq V(\boldsymbol{x}_0, t_0) - \frac{\epsilon_2}{2} (s_0 - t_0) - \epsilon_1/2.
\]
But we also have $h(\boldsymbol{\psi}_{\boldsymbol{x}_0}^{\boldsymbol{d}^*}(s_0)) \geq V(\boldsymbol{x}_0, t_0) - \epsilon\delta$.
Combining, we get
\[
V(\boldsymbol{x}_0, t_0) - \epsilon\delta \leq V(\boldsymbol{x}_0, t_0) - \frac{\epsilon_2}{2} (s_0 - t_0) - \epsilon_1/2,
\]
and thus, 
\[
\frac{\epsilon_2}{2} (s_0 - t_0) + \epsilon_1/2 \leq \epsilon\delta.
\]
Since $s_0 - t_0 \geq 0$, we have $\epsilon_1/2 \leq \epsilon\delta$. If we choose $\epsilon < \epsilon_1/(2\delta)$, then we get a contradiction.

\textbf{Case 2:}
\[
\sup_{s \in [t_0,t_0+\delta]} h(\boldsymbol{\psi}_{\boldsymbol{x}_0}^{\boldsymbol{d}^*}(s)) < V(\boldsymbol{x}_0, t_0) - \epsilon\delta.
\]
Then, from the dynamic programming principle, we have
\[
V(\boldsymbol{x}_0, t_0) \leq V(\boldsymbol{\psi}_{\boldsymbol{x}_0}^{\boldsymbol{d}^*}(t_0+\delta), t_0+\delta) + \epsilon\delta.
\]
Again, using the test function, we have
\[
V(\boldsymbol{\psi}_{\boldsymbol{x}_0}^{\boldsymbol{d}^*}(t_0+\delta), t_0+\delta) \leq \phi(\boldsymbol{\psi}_{\boldsymbol{x}_0}^{\boldsymbol{d}^*}(t_0+\delta), t_0+\delta).
\]
And as before,
\[
\phi(\boldsymbol{\psi}_{\boldsymbol{x}_0}^{\boldsymbol{d}^*}(t_0+\delta), t_0+\delta) - \phi(\boldsymbol{x}_0, t_0) \leq -\frac{\epsilon_2}{2} \delta.
\]
So,
\[
V(\boldsymbol{\psi}_{\boldsymbol{x}_0}^{\boldsymbol{d}^*}(t_0+\delta), t_0+\delta) \leq V(\boldsymbol{x}_0, t_0) - \frac{\epsilon_2}{2} \delta.
\]
Then,
\[
V(\boldsymbol{x}_0, t_0) \leq V(\boldsymbol{x}_0, t_0) - \frac{\epsilon_2}{2} \delta + \epsilon\delta.
\]
If we choose $\epsilon < \epsilon_2/2$, then we get $V(\boldsymbol{x}_0, t_0) < V(\boldsymbol{x}_0, t_0)$, a contradiction.

Therefore, in both cases we obtain a contradiction. Hence, the subsolution condition must hold.

\textbf{Step 2: Uniqueness}

Uniqueness follows from the comparison principle for viscosity solutions.  Define the Hamiltonian $H(\cdot,\cdot,\cdot): \mathbb{R}^n\times \mathbb{R}\times \mathbb{R}^b \rightarrow \mathbb{R}$ by $H(\boldsymbol{x}, t, \boldsymbol{p})= \sup_{\boldsymbol{d} \in \mathcal{D}} \boldsymbol{p} \cdot \boldsymbol{F}(\boldsymbol{x}, \boldsymbol{d})$. It is 

1) Continuous in all arguments (since $\boldsymbol{F}$ is continuous and $\mathcal{D}$ is compact);

2) Convex in $\boldsymbol{p}$ (as a supremum of linear functions);

3) The terminal condition $h$ is locally Lipschitz.

Under these conditions, standard viscosity solution theory [16, Theorem 2.1] guarantees that equation (\ref{hami}) admits at most one Lipschitz continuous viscosity solution.

Since $V$ is Lipschitz continuous (by Lemma \ref{continuous}) and satisfies the viscosity solution conditions, it is the unique Lipschitz viscosity solution to (\ref{hami}).
\end{proof}

\color{black}
\section*{Appendix B}

\begin{proposition}(Existence of Globally Lipschitz Extension).  
\label{exl}
Let the system \eqref{sys} satisfy Assumption \ref{assmp} and let the safe set $\mathcal{S}$ be open and bounded. Then there exists a function $\bm{F}(\cdot,\cdot):\mathbb{R}^n \times \mathcal{D} \to \mathbb{R}^n$ such that 
\begin{enumerate}
    \item $\bm{F}(\bm{x},\bm{d}) = \bm{f}(\bm{x},\bm{d})$ for all $\bm{x} \in \overline{\mathcal{S}}$ and $\bm{d} \in \mathcal{D}$;
    \item $\bm{F}$ is globally Lipschitz in $\bm{x}$ uniformly over $\bm{d} \in \mathcal{D}$;
    \item $\bm{F}$ preserves the affine structure: $\bm{F}(\bm{x},\bm{d}) = \bm{F}_1(\bm{x}) + \bm{F}_2(\bm{x})\bm{d}$.
    \item $\bm{F}$ preserves the affine structure: $\bm{F}(\bm{x},\bm{d}) = \bm{F}_1(\bm{x}) + \bm{F}_2(\bm{x})\bm{d}$.
\end{enumerate}
\end{proposition}
\begin{proof}
We construct $\bm{F}$ through component-wise extension:

Let $\bm{f}_1(\bm{x}) = (f_{1,1}(\bm{x}), \ldots, f_{1,n}(\bm{x}))^\top$ and $\bm{f}_2(\bm{x}) = [f_{2,ij}(\bm{x})]_{n \times m}$. Since $\overline{\mathcal{S}}$ is compact and each component function is locally Lipschitz, they are Lipschitz continuous on $\overline{\mathcal{S}}$.

By Kirszbraun's Theorem, for each component $f_{1,i}(\cdot): \mathbb{R}^n \to \mathbb{R}$ ($i=1,\ldots,n$), there exists a globally Lipschitz extension $\tilde{f}_{1,i}(\cdot): \mathbb{R}^n \to \mathbb{R}$ with the same Lipschitz constant. Similarly, for each entry $f_{2,ij}(\cdot): \mathbb{R}^n \to \mathbb{R}$ of $\bm{f}_2$, there exists a globally Lipschitz extension $\tilde{f}_{2,ij}(\cdot): \mathbb{R}^n \to \mathbb{R}$.

Define $\bm{F}_1(\bm{x}) = (\tilde{f}_{1,1}(\bm{x}), \ldots, \tilde{f}_{1,n}(\bm{x}))^\top$ and $\bm{F}_2(\bm{x}) = [\tilde{f}_{2,ij}(\bm{x})]_{n \times m}$, and let

\[
\bm{F}(\bm{x},\bm{d}) = \bm{F}_1(\bm{x}) + \bm{F}_2(\bm{x})\bm{d}.
\]

This construction ensures:
\begin{enumerate}
    \item For $\bm{x} \in \overline{\mathcal{S}}$, $\bm{F}_1(\bm{x}) = \bm{f}_1(\bm{x})$ and $\bm{F}_2(\bm{x}) = \bm{f}_2(\bm{x})$, so $\bm{F}(\bm{x},\bm{d}) = \bm{f}(\bm{x},\bm{d})$;
    \item Since each component is globally Lipschitz and $\mathcal{D}$ is compact, $\bm{F}$ is globally Lipschitz in $\bm{x}$ uniformly over $\bm{d} \in \mathcal{D}$;
    \item The construction explicitly maintains the affine form in $\bm{d}$. 
\end{enumerate}

Therefore, $\bm{F}$ satisfies all required properties. 
\end{proof}

\begin{proposition}[Marginal Function Construction]
\label{thm:marginal-function}
Define the marginal function $B(\cdot,\cdot): [0,\infty) \times \mathbb{R}^n \to (-\infty, 0]$ as the negative distance between the boundary $\partial\mathcal{S}$ and the set of states reached by solutions to system \eqref{sys1} starting from $\bm{x}$ over $[t,T]$:
\[
B(t,\bm{x}) = -\inf\{\|\bm{\psi}_{\bm{x}}^{\bm{d}}(\tau)\|_{\partial\mathcal{S}} \mid \tau \in [t,T], \bm{d} \in \mathcal{M}\},
\]
where $\|\bm{\psi}_{\bm{x}}^{\bm{d}}(\tau)\|_{\partial\mathcal{S}} = \min_{\bm{y} \in \partial\mathcal{S}} \|\bm{\psi}_{\bm{x}}^{\bm{d}}(\tau) - \bm{y}\|$. If system \eqref{sys1} is safe over $[0,T]$, then there exists a time-dependent barrier certificate constructed from $B(t,\bm{x})$.
\end{proposition}
\begin{proof}
From Proposition 3 in \cite{maghenem2022converse}, $B$ is locally Lipschitz continuous. Following Theorem 2 in \cite{maghenem2022converse}, the map $t \mapsto B(t,\bm{\psi}_{\bm{x}}^{\bm{d}}(t))$ is monotonically nonincreasing. If system \eqref{sys1} is safe, then 
$\|\bm{\psi}_{\bm{x}}^{\bm{d}}(\tau)\|_{\partial\mathcal{S}}<0$ holds for all $\tau\in [0,T], \bm{x}\in \mathcal{X}_0$, and $\bm{d}\in \mathcal{D}$. Thus, by Lemma \ref{equiv1}, we can obtain that there exists $\delta > 0$ such that $B(0,\bm{x}) \leq -\delta$ for $\bm{x} \in \mathcal{X}_0$. Moreover, $B(t,\bm{x}) = 0$ for $t \in [0,T]$ and $\bm{x} \in \partial\mathcal{S}$, implying $B(t,\bm{x}) \geq 0$ for $t \in [0,T]$ and $\bm{x} \in \partial\mathcal{S}$. Following the proof methodology of Lemma \ref{converse_pre}, we can approximate $B$ by a continuously differentiable function that serves as a time-dependent barrier certificate.
\end{proof}

\end{document}